\documentclass[conference]{IEEEtran}
\IEEEoverridecommandlockouts
\usepackage{cite}
\usepackage{amsmath,amssymb,amsfonts}
\usepackage{algorithmic}
\usepackage{graphicx}
\usepackage{textcomp}
\usepackage{xcolor}

\newtheorem{theorem}{Theorem}[section]

\newtheorem{lemma}[theorem]{Lemma}

\def\BibTeX{{\rm B\kern-.05em{\sc i\kern-.025em b}\kern-.08em
    T\kern-.1667em\lower.7ex\hbox{E}\kern-.125emX}}

\begin{document}

\title{Robust Current Regulation of MMC-based MTDC Power Systems based on Lyapunov Inequality}

\author{\IEEEauthorblockN{Victor Daniel Reyes Dreke}
\IEEEauthorblockA{\textit{IEPG Group, TU Delft} \\
Delft, Netherlands \\
v.d.reyesdreke@tudelft.nl}
\and
\IEEEauthorblockN{Rahul Rane}
\IEEEauthorblockA{\textit{IEPG Group, TU Delft} \\
Delft, Netherlands \\
r.m.rane@tudelft.nl}
\and
\IEEEauthorblockN{Aleksandra Leki\'{c}}
\IEEEauthorblockA{\textit{IEPG Group, TU Delft} \\
Delft, Netherlands \\
a.lekic@tudelft.nl}
\thanks{The work is supported by Horizon Europe project PROSECCO, under grant agreement 101160687, and by NWO Veni project SAFE-GRID, with project number 20248.}}

\maketitle

\begin{abstract}
Multi-terminal DC (MTDC) transmission systems based on modular multilevel converters (MMCs) are a key component of the envisioned future energy sector, where sustainability and efficiency are increasingly prioritized. 
To ensure their reliable operation, MMC currents must be regulated safely and rapidly under a wide range of uncertain operating conditions.
Consequently, the design of current controllers faces a fundamental challenge: achieving fast transient response while maintaining robustness against uncertainties. 
This paper addresses this challenge by proposing a linear matrix inequality (LMI)–based design framework that leverages Lyapunov stability conditions to synthesize a less conservative static state-feedback controller. 
The proposed design method explicitly accounts for system constraints, including input saturation and overcurrent limits. 
The proposed method's effectiveness is assessed on the CIGRE MT-HVDC benchmark, simulated in RTDS\textsuperscript{\textregistered}, and compared with existing methods.
    
\end{abstract}

\begin{IEEEkeywords}
Modular multilevel converter, High-voltage direct current transmission, robust control, linear matrix inequalities, Lyapunov inequalities. 
\end{IEEEkeywords}

\section{Introduction}
\label{lab:Introduction}

Multi-terminal high-voltage DC (MTDC) grids are becoming the core of the future energy transmission landscape. 
MTDC systems enable high-power energy transfer with improved efficiency, thereby facilitating the integration and sharing of distributed energy resources such as offshore wind farms and large-scale solar plants in a more sustainable manner \cite{Shetgaonkar2023_MPCHVDC}. 
In these cases, modular multilevel converters (MMCs) serve as the primary power electronic device, responsible for efficient energy conversion and reliable AC/DC power transmission \cite{Beddard2015_MMCModelling}.

To ensure the reliable operation of MTDC systems, MMC currents must be regulated accurately and safely over a broad range of operating conditions \cite{Ansari2020_MMCSurvey}. These conditions may vary rapidly and unpredictably due to fluctuations in power demand, network contingencies, or external disturbances, including natural phenomena \cite{Kumar2024_MMCFaults,Tavakoli2021_MMCACFaults}. 
MMC current dynamics are inherently nonlinear and multivariable \cite{Dekka2019_MPCMMCSurvey}. The simplified models commonly used for controller synthesis and analysis are therefore subject to both structured and unstructured uncertainties, coming from parameter variations, passive component tolerances, and modeling approximations such as linearization and discretization \cite{Louybary2024_MMC_LMI_Uncertainties}. Consequently, the controller design must resolve a fundamental control-theoretic trade-off between fast transient performance and robustness-induced conservativeness.

Existing MMC current control strategies typically treat transient performance and robustness separately. Fast dynamics are often achieved through heuristic tuning of controller hyperparameters, whereas robustness is enforced via synthesis methods such as $\mathcal{H}_\infty$ control~\cite{Gil2017_RobustHVDC,Tavakoli2021_OptimalMMC} and $\mu$-synthesis~\cite{tavakoli2022_RobustHVDC}. 
Although these approaches are well established, the resulting controllers may be conservative due to worst-case optimization during synthesis.
LMI-based robust controllers~\cite{Ayari2017_RobustHVDC,Belhaouane2019_RobustHVDC} provide an alternative by synthesizing static state-feedback gains that guarantee constraint-admissible stability and optimality. 
These methods offer simpler implementation and strong robustness properties; however, their conservativeness may limit transient performance.
In general, the transient response achieved in~\cite{Gil2017_RobustHVDC}-\cite{Belhaouane2019_RobustHVDC} can be improved through careful tuning of weighting filters and matrices. 
However, this process is typically time-consuming and does not systematically ensure the desired performance.

This paper addresses this challenge by proposing an LMI-based design framework that leverages Lyapunov inequalities to synthesize a less conservative static state-feedback gain. 
The proposed design emphasizes fast state regulation by formulating a synthesis problem that maximizes a constraint-admissible domain of attraction, without explicitly penalizing the control input. 
Robustness is ensured through Lyapunov-based constraints that guarantee stability for all admissible uncertainty realizations while maintaining system trajectories close to the nominal response. 
By explicitly accounting for user-defined system constraints, the resulting robust current regulator avoids excessively aggressive control actions. 
The proposed method's effectiveness is assessed on the CIGRE MT-HVDC benchmark, simulated in RTDS\textsuperscript{\textregistered}, and compared with existing methods.

The paper is organized as follows. Section II provides modelling of the MMC. In Section III, a robust current regulator design is described. Section IV gives results from numerical simulations, while Section V presents concluding remarks.

\section{MMC Fundamentals}
\label{sec:Fundamentals}
MMCs are voltage source converters (VSCs) comprised of 3 legs with two arms in which $N$ submodules  
\textcolor{black}{(SMs)} 
are connected in series to an equivalent resistor ($R_m$) and inductor ($L_m$), see Figure~\ref{fig:MMC_SEST26_}. 
Each submodules constitute a modular AC/DC power conversion unit driven by switching signals.
The arms are named as upper and lower arm, denoted by ${n \in \{u,l\}}$, and the phases are denoted by $m \in \{a,b,c\}$. 
\begin{figure}[htp]
    \centering
    \includegraphics[width=\linewidth]{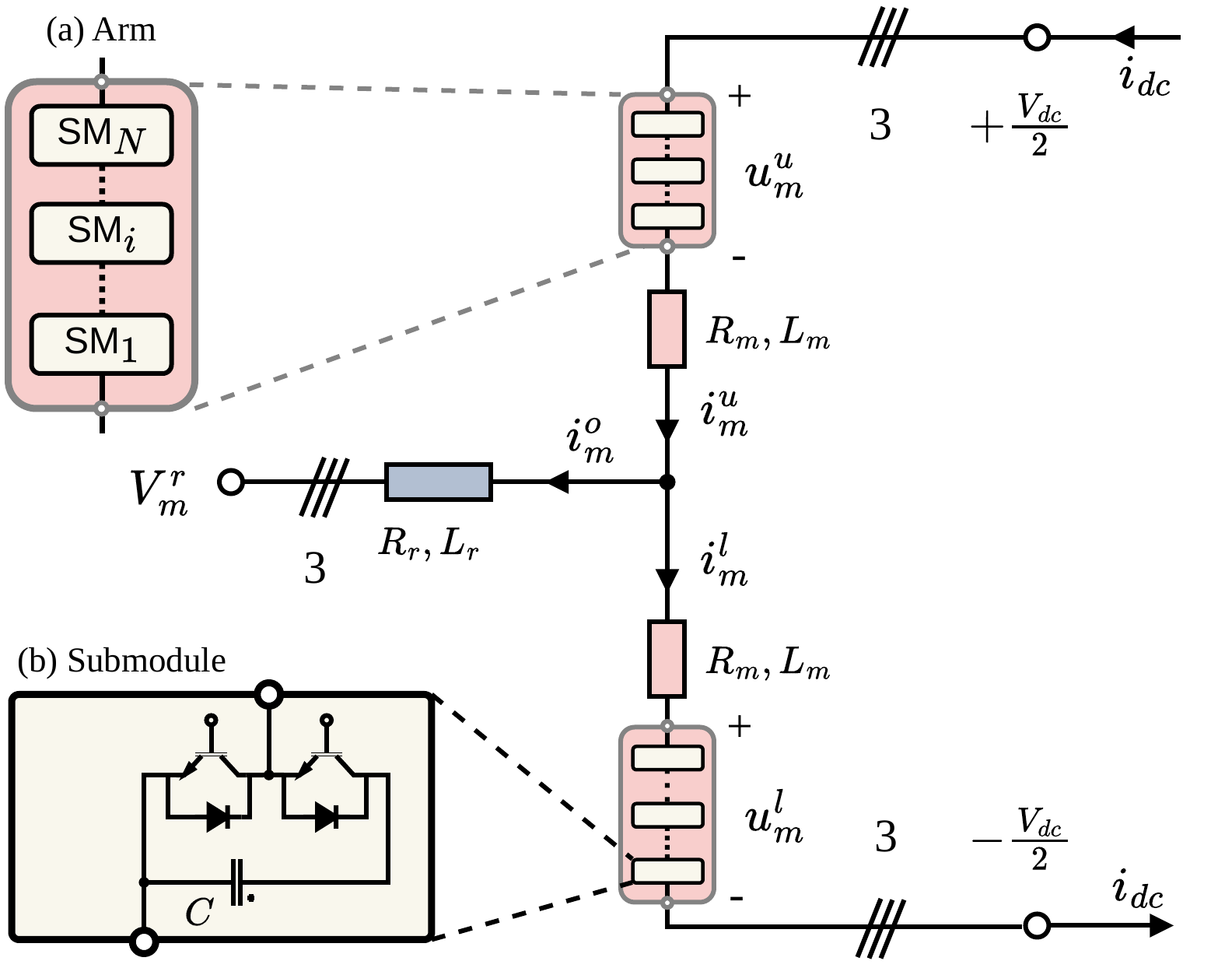}
    \caption{Simplified diagram of the MMC topology  per-phase $m \in \{a,b,c\}$. (a) MMC arms. (b) Half-bridge submodule.}
    \label{fig:MMC_SEST26_}
\end{figure}

\subsection{MMC Current Dynamics}
The MMC current dynamics can be analyzed in different manners, see \cite{Bergna2018_Generalized_MMC_MOdel}. 
We opt to use an average equivalent representation, where the following simplifications are considered: (\emph{i}) the capacitor voltage in each submodule is balanced at all time and (\emph{ii}) the switching signals are driven by an aggregated continuous signal, which are the modulation indices. 

Kirchhoff's voltage law shows that MMC currents are governed by transformer 
\textcolor{black}{(TF)} 
$(V^r_{m})$, DC-link $(V_{dc})$, and arm $(u^{u,l}_{m})$ voltages, such that
\begin{subequations}
\label{eq:DifferentialEquationUpper}
    \begin{align}
        \label{eq:DifferentialEquationUpper1}
             L_{m} \frac{\mathrm{d}\, {i}_{m}^{{u}}}{\mathrm{d} t}  &=-R_{m} {i}_{m}^{{u}}+u^u_{m}+V^r_{m}+  L_{r} \frac{\mathrm{d}\, {i}_{m}^{\Delta}}{\mathrm{d} t}+R_{r}{i}_{m}^{\Delta}-\frac{V_{dc}}{2}, \\
             \label{eq:DifferentialEquationUpper2}
             L_{m} \frac{\mathrm{d} \, {i}_{m}^{l}}{\mathrm{d} t}  &=-R_{m} {i}_{m}^{l}+u^l_{m}-V^r_{m}  -L_{r}\frac{\mathrm{d}\, {i}_{m}^{\Delta}}{\mathrm{d} t}-R_{r}{i}_{m}^{\Delta}-\frac{V_{dc}}{2},
    \end{align}
\end{subequations}
where $L_{m}$ and $L_{r}$ are the arm and transformer inductance, respectively; and $R_{m}$ and $R_{r}$ are the arm and transformer resistance, respectively. 
The values of the MMC's passive components include both nominal and tolerance quantities. 
Therefore, these values contain nominal and uncertain quantities, where 
\begin{equation}
    \begin{aligned}
        L_{m}&= L_{m0} +  L_{m\Delta}, \quad \, R_{m} = R_{m0} +  R_{m\Delta},\\ 
        L_{r}&= L_{r0} + L_{r \Delta}, \quad\quad  R_{r}= R_{r0} + R_{r\Delta}.
    \end{aligned}
\end{equation}

    To distinguish between nominal and uncertain quantities, subscript notations  ${(\cdot)}_{\cdot 0}$ and ${(\cdot)}_{\cdot \Delta}$ are used, respectively.  
    $ L_{m \Delta}$, $ R_{m \Delta}$, $ L_{r \Delta}$, $ R_{r \Delta}$, 
    represent structured uncertainties, these coefficients are defined as bounded parameters, such that
    \begin{equation}
    \label{eq:uncertain_tolerance_coefficeint}
        \left( L_{m \Delta}, R_{m \Delta}, L_{r \Delta}, R_{r \Delta} \right) \in \mathcal{X}_{\Delta} \subset \mathbb{R}.
    \end{equation}
The MMC currents determine the power transfer through the output and circulating currents, defined per phase \( m \) as 
\begin{equation}
\label{eq:outputCurrentsDef}
    i^{o}_{m} = i_m^u - i_m^l \quad \text{and} \quad i^{c}_{m} = \tfrac{1}{2} (i_m^u + i_m^l),
\end{equation}
respectively. 
From \eqref{eq:DifferentialEquationUpper}, the differential equations describing the dynamics of \( i_o \) and \( i_c \) are derived. 
To facilitate controller design and analysis, these dynamics are transformed to the \(dqz\) reference frame using the Park transformation \cite{Bergna2018_Generalized_MMC_MOdel}. 
In the next section, we present the corresponding linearized state-space realization of \( i_o \) and \( i_c \) in the \(dqz\) reference frame.

\subsection{MMC State-Space Model}
Following \eqref{eq:DifferentialEquationUpper} and modeling procedures in \cite{Shetgaonkar2023_MPCHVDC}, we represent the dynamics of \( i_o \) and \( i_c \) in the $dqz$-frame as uncertain discrete state-space models~\footnotemark{} with sampling time $T_s$ as follows:
\footnotetext{Notation $x^{+}$ describes the next sample, such that $x^{+} \triangleq x(k+1)$. }
\begin{itemize}
    \item For the output current 
    \textcolor{black}{control (OCC),} 
    we define their state and control input vectors as 
    \begin{equation}    
        x_{odq} =\begin{bmatrix} i^o_d & i^o_q\end{bmatrix}^{\top}, \; \text{and}\; u_{odq} = \begin{bmatrix}u^o_d & u^o_q\end{bmatrix}^{\top},
    \end{equation}
    respectively. Then, its dynamics yield 
    \begin{equation}
    \label{eq:ss_output_current_dq}
        x^+_{odq} = \left(\textbf{A}^o_0 + \textbf{A}_{\Delta}\right)x_{odq}+ \left(\textbf{B}^o_0 + \textbf{B}_{\Delta}\right)u_{odq}
    \end{equation}
     where $\textbf{A}_0 := e^{\textbf{A}^o_c{T}_{s}}$ and $\textbf{B}_0:= \int_0^{T_s}  e^{\textbf{A}^o_{c}\tau} \textbf{B}^o_{c} d\tau$ are the nominal state-space matrices with 
    \begin{equation*}
       \begin{aligned}
           \textbf{A}^o_c = \begin{bmatrix}
        -\frac{R_{eq}^{0}}{L_{eq}^{0}} & -\omega_{0} \\
            \omega_{0} & -\frac{R_{eq}^{0}}{L_{eq}^{0}}
        \end{bmatrix}, \,  \text{and} \,   \textbf{B}^o_c = \begin{bmatrix}
            \frac{1}{L_{eq}^{0}} & 0\\ 0 & \frac{1}{L_{eq}^{0}} 
        \end{bmatrix},
       \end{aligned} 
    \end{equation*}
    $L_{eq}^{0} = L_{r0}+ \frac{L_{m0}}{2},$ $R_{eq}^{0} = R_{r0}+ \frac{R_{m0}}{2}$, $\omega_0 = 2\pi f_0$.
    \item For the circulating current 
    \textcolor{black}{control (CCC),} 
    we define its state and control input vectors as 
    \begin{equation}
        x_{cdq} =\begin{bmatrix} i^c_d & i^c_q\end{bmatrix}^{\top}, \; \text{and}\; u_{cdq} = \begin{bmatrix}u^c_d & u^c_q\end{bmatrix}^{\top},
    \end{equation}
    respectively. Then, its dynamics yield 
    \begin{equation}
    \label{eq:ss_circulating_current_dq}
       x^+_{cdq} = \left(\textbf{A}^c_0 + \textbf{A}_{\Delta}\right)x_{cdq}+ \left(\textbf{B}^c_0 + \textbf{B}_{\Delta}\right)u_{cdq}
    \end{equation}
    where $\textbf{A}_0 := e^{\textbf{A}^c_c{T}_{s}}$ and $\textbf{B}_0:= \int_0^{T_s}  e^{\textbf{A}^c_{c}\tau} \textbf{B}^c_{c} d\tau$ are the nominal state-space matrices with 
    \begin{equation*}
       \begin{aligned}
           \textbf{A}^c_c = \begin{bmatrix}
        -\frac{R_{m0}}{L_{m0}} & 2\omega_0 \\
            -2\omega_0 & -\frac{R_{m0}}{L_{m0}}
        \end{bmatrix}, \,  \text{and} \, \textbf{B}^c_c = \begin{bmatrix}
            -\frac{1}{L_{m0}} & 0\\ 0 & -\frac{1}{L_{m0}} 
        \end{bmatrix}.
       \end{aligned} 
    \end{equation*}
\end{itemize}

In both cases, $(\textbf{A}_{\Delta },  \textbf{B}_{\Delta})$ are matrices filled with unknown uncertainties coefficients, i.e.,
\begin{equation}
    \rho = [\rho^{a}_{1,1}, \ldots, \rho^{b}_{n_x,n_u}]^{\top}\in \mathbb{P} \subset \mathbb{R}^{n_{\rho}},
\end{equation}
such that
\begin{equation}
\label{eq:coefficient_unknown}
  \textbf{A}_{\Delta } {=} \begin{bmatrix} \rho^{a}_{1,1} &{\ldots}&\rho^{a}_{1,n_x}\\ \vdots &\ddots & \vdots\\ \rho^{a}_{n_x,1} &{\ldots}&\rho^{a}_{n,n}\end{bmatrix},\;\textbf{B}_{\Delta}  {=} \begin{bmatrix}\rho^{b}_{1,1} &{\ldots}&\rho^{b}_{1,n_u}\\ \vdots &\ddots & \vdots\\ \rho^{b}_{n_x,1} &{\ldots}&\rho^{b}_{n_x,n_u}\end{bmatrix}.
\end{equation}
where $n_{\rho}={n_x}^2+ {n_x}\,{n_u}$, ${n_x}$ is the state dimension and ${n_u}$ input dimension. 
We abuse $\textbf{A}_{\Delta },  \textbf{B}_{\Delta}$ notation, as they are not necessarily the same for both output and circulating currents.  

Additionally, we derive the augmented models in \eqref{eq:ss_output_current_dq} and \eqref{eq:ss_circulating_current_dq} to enforce zero-error tracking when following step references. 
Discrete-time derivation is denoted by the operator $\delta \cdot$, such that 
\begin{equation}
    \delta x(k):= x(k) -x(k-1). 
\end{equation}
For the output currents, we define an augmented state vector, i.e., $\vec{x}_{odq} = [\delta {x}^{\top}_{odq} \; {x}^{\top}_{odq}]^{\top}$, and an input rate vector, i.e, $\delta u_{odq} =[\delta u^o_{d}\; \delta u^o_{q}]^{\top}$. 
Then, the augmented model of \eqref{eq:ss_output_current_dq} yields
\begin{equation}
\label{eq:ss_output_current_augmented_dqz}
    \vec{x}^{+}_{odq} = \textbf{A}^o \vec{x}_{odq} + \textbf{B}^o \delta u_{odq}
\end{equation}
with $\textbf{A}^o = \left(\vec{\textbf{A}}^{o} +\vec{\textbf{A}}_{\Delta}\right)$, $\textbf{B}^o=\left(\vec{\textbf{B}}^{o} +\vec{\textbf{B}}_{\Delta}\right)$, 
\begin{equation*}
\label{eq:ss_output_current_augmented_matrix_dqz}
         \vec{\textbf{A}}^{o} {=} \begin{bmatrix}
        \textbf{A}^o_0 & 0\\
        \textbf{C} \textbf{A}^o_0  &I
    \end{bmatrix} , \vec{\textbf{B}}^{o}{=} \begin{bmatrix}
        \textbf{B}^o_0\\
       \textbf{C} \textbf{B}^o_0
    \end{bmatrix},  
    \vec{\textbf{A}}_{\Delta} {=} \begin{bmatrix}
        \textbf{A}_{\Delta}\\
        \textbf{C} \textbf{A}_{\Delta}
    \end{bmatrix} , \vec{\textbf{B}}_{\Delta}{=} \begin{bmatrix}
        \textbf{B}_{\Delta}\\
       \textbf{C} \textbf{B}_{\Delta}
    \end{bmatrix},
\end{equation*}
where $\textbf{C} = I$, $\delta {x}_{odq}(k) = {x}_{odq}(k) - {x}_{odq}(k-1) $, $\delta u_{odq}(k) = u_{odq}(k) - u_{odq}(k-1) $. 
Similarly, we derive the augmented model of \eqref{eq:ss_circulating_current_dq} defining an augmented state vector, i.e., $\vec{x}_{cdq} = [\delta {x}^{\top}_{cdq} \; {x}^{\top}_{cdq}]^{\top}$, and an  input rate vector,i.e, $\delta u_{cdq} =[\delta u^c_{d}\; \delta u^c_{q}]^{\top}$. 
Then, its augmented model yields
\begin{equation}
\label{eq:ss_circulating_current_augmented_dqz}
    \vec{x}^{+}_{cdq} = \textbf{A}^c \vec{x}_{cdq} + \textbf{B}^c \delta u_{cdq},
\end{equation}
with $\textbf{A}^c = \left(\vec{\textbf{A}}^{c} +\vec{\textbf{A}}_{\Delta}\right)$, $\textbf{B}^c=\left(\vec{\textbf{B}}^{c} +\vec{\textbf{B}}_{\Delta}\right)$,
\begin{equation*}
\label{eq:ss_circulating_current_augmented_matrix_dqz}
         \vec{\textbf{A}}^{c} {=} \begin{bmatrix}
        \textbf{A}^c_0 & 0\\
        \textbf{C} \textbf{A}^c_0  &I
    \end{bmatrix} , \vec{\textbf{B}}^{c}{=} \begin{bmatrix}
        \textbf{B}^c_0\\
       \textbf{C} \textbf{B}^c_0
    \end{bmatrix},  
    \vec{\textbf{A}}_{\Delta} {=} \begin{bmatrix}
        \textbf{A}_{\Delta}\\
        \textbf{C} \textbf{A}_{\Delta}
    \end{bmatrix} , \vec{\textbf{B}}_{\Delta}{=} \begin{bmatrix}
        \textbf{B}_{\Delta}\\
       \textbf{C} \textbf{B}_{\Delta}
    \end{bmatrix}.
\end{equation*}

\subsection{MMC Control}
The MMC control system follows a hierarchical architecture composed of three levels. The upper level, referred to as the outer control loop, uses power references obtained from the optimal power flow (OPF) and generates current references $(i^{o*}_{dq}, i^{o*}_{dq})$ to ensure the desired AC-to-DC power transfer. 
The inner control loop then uses $i^{o*}_{dq}, i^{c*}_{dq}, i^{o}_{dq},$ and $i^{c}_{dq}$ to compute the corresponding voltage references $u^{o*}_{dq}$ and $u^{c*}_{dq}$. The circulating current reference is set to zero, such that  $i^{c*}_{dq} =0$.
At the lowest level, $u^{o*}_{dq}$ and $u^{c*}_{dq}$ are converted into modulation indices and switching signals for the submodule gates. This paper focuses on the inner control loop.
\begin{figure}[htp]
    \centering
    \includegraphics[width=\linewidth]{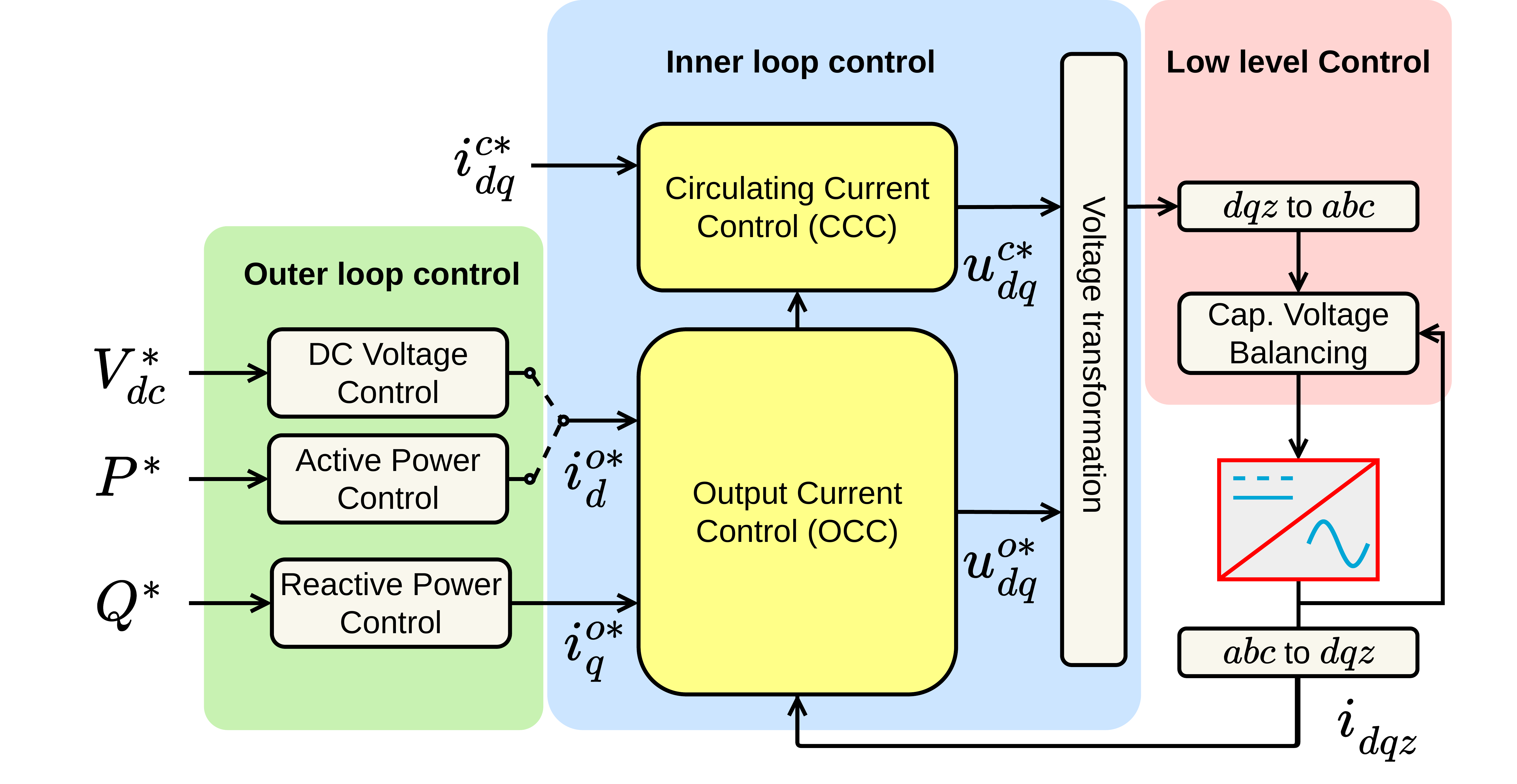}
    \caption{Generalized MMC Control Architecture.}
    \label{fig:SEST26_Control_Diagram}
\end{figure}

Beside regulating $i^o_{dq}$ and  $i^c_{dq}$, the inner control loop needs to avoid creating currents and voltage references that violate operation limits.
As an alternative to conventional solutions~\cite{Du2018MMCBook}, we use an architecture based on static state-feedback and feedforward gains, as shown in Fig.~\ref{fig:SEST26_Control_Architecture}.

\begin{figure} [htp]
    \centering
    \includegraphics[width=\linewidth]{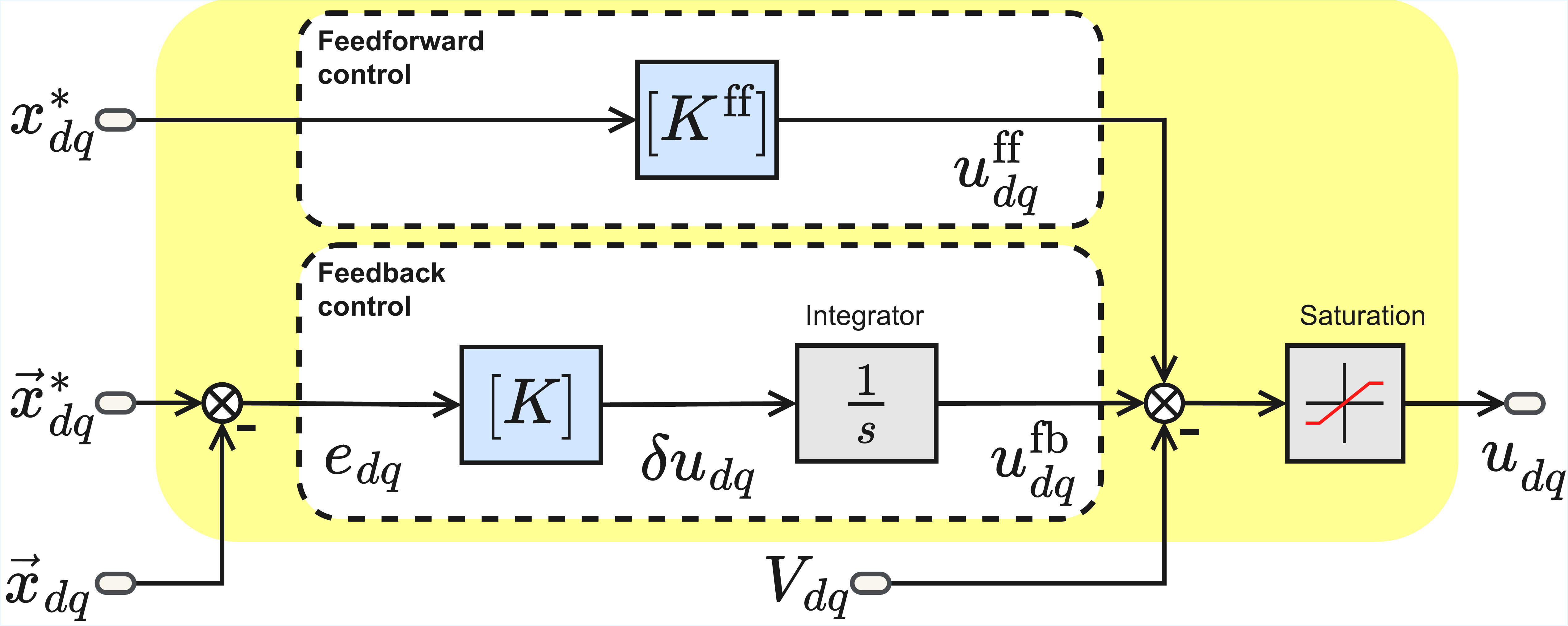}
    \caption{Generalized architecture of both inner loop controllers based on state feedback $(K)$ and feedforward $(K^{\text{ff}})$ gains.}
    \label{fig:SEST26_Control_Architecture}
\end{figure}

The feedforward term $u^{\text{ff}}$ is computed as follow
\begin{equation}
    u^{\text{ff}} =  K^{\text{ff}}x^{*}_{dq} \quad \text{with} \quad K^{\text{ff}}:=-\textbf{B}^{-1}_0\left(\textbf{I}-\textbf{A}_0\right).
\end{equation}
It yields steady-state voltage references for the nominal current dynamics.  
The feedback stage leverages the error of the augmented state vector, i.e, $e_{dq} {:=}  \vec{x}^{*}_{dq}{-}\vec{x}^{}_{dq}$, the static gain $K$ and an integrator block to recover the control input. Finally, a saturation block avoids over-voltage references.

\section{ Robust Current Regulator Design}
\label{lab:Methodology}
This section addresses the synthesis of $K$, i.e., the robust current regulator (RCR). 
First, we define performance constraints, which are denoted as:  
\begin{subequations}
    \label{eq:constraintDefinitionLemma}
        \begin{align}
            \label{eq:constraintDefinitionLemma_x}
            \mathbb{X} &:= \{x\in\mathbb{R}^{n_x}: g_{t_x}^\top x \leq 1, \forall t_x \in \{1,{...},s\}\},\\
            \label{eq:constraintDefinitionLemma_u}
            \mathbb{U} &:= \{ u\in\mathbb{R}^{n_u}: h_{t_u}^\top  u \leq 1, \forall t_u \in \{1,{...},l\}\},
        \end{align}
\end{subequations}
where $x$ and $u$ are generic state and input vectors, $\mathbb{X}$ and $\mathbb{U}$ are classic box constraints that relate to current and voltage reference constraints, respectively. 
The proposed controller synthesis leverages Lyapunov stability criteria to formulate the following optimization problem:
\begin{subequations}
\label{eq:optimzation_problem}
    \begin{align} 
    \label{eq:optimzation_problem_min}
        \min_{P,K}& \quad -\operatorname{det}(P)^{1/n_x} \\
                 \label{eq:optimzation_problem_lyapunov}
                  \text{s.t.}&\;\; \forall \rho \in \mathbb{P}, \quad (\textbf{A}+\textbf{B}K)^{\top}P(\textbf{A}+\textbf{B}K)-P \prec 0,\\
                  \label{eq:optimzation_problem_const}
                  &\;\; \forall x \in \mathbb{S}, \quad  x \in \mathbb{X}, \; \text{and} \;  u \in \mathbb{U}, \; \\
                  \label{eq:optimzation_problem_const_1}
                  &\;\; P \succ 0
    \end{align}
\end{subequations}
where $(\textbf{A},\textbf{B})$ is a pair of state-space matrices, in this case $(\textbf{A}^o,\textbf{B}^o)$ or $(\textbf{A}^c,\textbf{B}^c)$ depending of the control problem, $K \in \mathbb{R}^{n_u\times n_x}$ is the static gain, $P \in \mathbb{R}^{n_x\times n_x}$ defines a suitable quadratic candidate Lyapunov function, $\mathbb{S}$ is a domain of attraction such that
\begin{equation}
    \mathbb{S}:=\left\{x\in \mathbb{R}^n: x^{\top}Px \leq 1  \; \text{and} \; Kx \in \mathbb{U} \right\} \subseteq \mathbb{X}.
\end{equation}
We  consider $x := e_{dq} \in \mathbb{R}^{4}$, and $u := \delta u_{dq} \in \mathbb{R}^{2}$, where $e_{dq}$ and $\delta u_{dq}$ refer to the output or circulating currents accordingly. 
\textcolor{black}{Enforcing \eqref{eq:optimzation_problem_lyapunov}–\eqref{eq:optimzation_problem_const_1} ensures the fastest closed-loop response that is constraint-admissible across all state-space realizations, without using tuned hyperparameters such as $Q$ and $R$.}
\textcolor{black}{
\begin{lemma}
\label{lmm:main_lemma_paper}
    Consider the uncertain set $\mathbb{P}$ defined by a convex hull, such that
    \begin{equation}
        \label{eq:polytope_defintion}
        \mathbb{P} =  \text{conv} \{ v^1, v^2, \ldots, v^p\},
    \end{equation}
    where $p \in \mathbb{N}_{>0}$, $v^j$  are the $j$th vertex of the polytope $\mathbb{P}$ and $\text{conv} \{ \cdot\}$ denotes the operation of taking the convex hull of the argument.
    Let define matrices of the state-space realizations of \eqref{eq:ss_output_current_augmented_dqz} and \eqref{eq:ss_circulating_current_augmented_dqz} as affine matrix functions of $v^j$,  such that 
    \begin{equation}
    \label{eq:matrix_function_definition}
        \textbf{A}(v^j): \mathbb{R}^{n_{\rho}}\rightarrow \mathbb{R}^{n_x \times n_x}, \, \text{and} \, \textbf{B}(v^j): \mathbb{R}^{n_{\rho}}\rightarrow \mathbb{R}^{n_x \times n_u}
    \end{equation}
    with
    \begin{equation}
        \textbf{A}(v^j) = \textbf{A}_0+\textbf{A}_{\Delta}(v^j) \quad \text{and} \quad \textbf{B}(v^j) = \textbf{B}_0+\textbf{B}_{\Delta}(v^j)
    \end{equation}
    where $(\textbf{A}_0, \textbf{B}_0)$ describes the nominal dynamics and $(\textbf{A}_{\Delta}, \textbf{B}_{\Delta})$ are uncertain matrices filled with $v^j$ as in \eqref{eq:coefficient_unknown}. 
    Given \eqref{eq:constraintDefinitionLemma}, \eqref{eq:ss_output_current_augmented_dqz} and \eqref{eq:ss_circulating_current_augmented_dqz}, inequalities \eqref{eq:optimzation_problem_lyapunov}-\eqref{eq:optimzation_problem_const_1} hold, if there exists $Y$ and  $Z =Z^{\top} \succ0$, such that
    \begin{subequations}
    \label{eq:constrained_synthesis}
        \begin{align}
        \label{eq:constrained_synthesis_1_lyapunov}
        \begin{aligned}
            &\forall j \in \{1, ..., p\},\\
            &\begin{bmatrix}
            Z & (\textbf{A}(v^j)Z+\textbf{B}(v^j)Y)^{\top} \\
            (\textbf{A}(v^j)Z+\textbf{B}(v^j)Y) & Z
            \end{bmatrix} \succ 0, 
        \end{aligned}&  \\
        \label{eq:constrained_synthesis_2_State}
            \forall t_x \in\left\{1, \ldots, s\right\}, \quad \begin{bmatrix}
                Z & \left(Z g_{t_x}\right)^{\top} \\
                \left(Z g_{t_x}\right) & 1
            \end{bmatrix}  \succcurlyeq 0,& \\
            \label{eq:constrained_synthesis_3_Input}
            \forall t_u \in\left\{1, \ldots, l\right\}, \quad \begin{bmatrix}
                 Z & \left(Y h_{t_u}\right)^{\top} \\
                \left(Y h_{t_u}\right) & 1
            \end{bmatrix}\succcurlyeq 0,& 
        \end{align}        
    \end{subequations}
    where $P = Z^{-1}$, $K =YZ^{-1}$, $g_{t_x}$ and $h_{t_u}$ are defined as \eqref{eq:constraintDefinitionLemma}, and $p$ is the number of vertices of $\mathbb{P}$.
\end{lemma}
}
\begin{IEEEproof}
\textcolor{black}{
Substituting \eqref{eq:matrix_function_definition} in \eqref{eq:optimzation_problem_lyapunov} and applying Schur complement, \eqref{eq:constrained_synthesis_1_lyapunov} yields
    \begin{equation}
    \label{eq:proof_eq_1}
        \begin{bmatrix}
           Z & (\textbf{A}(v^j)Z+\textbf{B}(v^j)Y)^{\top} \\
            (\textbf{A}(v^j)Z+\textbf{B}(v^j)Y) & Z
            \end{bmatrix} \succ 0
    \end{equation}
    with $Z = P^{-1}$. 
    Derivation of \eqref{eq:proof_eq_1} are well established, see \cite{apkarian2000parameterized}.
    $\mathbb{P}$ is a convex polytope and $\textbf{A}(\cdot)$ and $\textbf{B}(\cdot)$ are affine matrix functions. Based on multi-convexity \cite[Collorary 2.3]{apkarian2000parameterized}, \eqref{eq:proof_eq_1} holds for all $\rho \in \mathbb{P}$ if \eqref{eq:proof_eq_1} holds for all $v^j$.\\
    Applying Schur complement, \eqref{eq:constrained_synthesis_2_State} and \eqref{eq:constrained_synthesis_3_Input} yield
    \begin{subequations}
    \label{eq:proof_eq_2}
    \begin{align}
    \label{eq:proof_eq_2_1}
    \sqrt{g_{t_x}^\top P^{-1} g_{t_x}} &\leq 1, \quad \forall t_x {\in }\{1,2,{...},s\},\\
    \label{eq:proof_eq_2_3}
        \sqrt{h_{t_u}^\top K P^{-1} K^\top h_{t_u}} &\leq 1, \quad \forall t_x {\in }\{1,2,{...},l\},
    \end{align}
    \end{subequations}
    respectively. 
    Let $P^{1/2}$ be a symmetric square root, it holds
    \begin{equation}
    \label{eq:proof_eq_3}
         g_{t_x}^\top x {=} \left(P^{-1/2} g_{t_x}\right)^{\top} \left(P^{1/2} x\right) \leq \left\|P^{-1/2} g_{t_x}\right\| \, \left\|P^{1/2} x\right\|,
    \end{equation}
    for all $ t_x {\in }\{1,2,{...},s\}$ by Cauchy–Schwarz inequality.
    From \eqref{eq:proof_eq_3}, it follows
    \begin{equation}
        g_{t_x}^{\top} x \le \sqrt{g_{t_x}^{\top} P^{-1} g_{t_x}} \,\sqrt{x^{\top} P x} \le\sqrt{g_{t_x}^{\top} P^{-1} g_{t_x}} \leq 1.
    \end{equation}
    So for any $x \in \mathbb{S}$, if $g_{t_x}^{\top} P^{-1} g_{t_x} \le 1$, then $g_{t_x}^{\top} x \le 1$. Therefore, $\mathbb{S} \subseteq \mathbb{X}$. This is a standard proof of ellipsoidal inclusion, see \cite{apkarian2000parameterized}. The same reasoning applies to prove that with $u = Kx$ and $\mathbb{U}$  as \eqref{eq:constraintDefinitionLemma_u}, it follows that $K\mathbb{S} \subseteq \mathbb{U}$ for all $x\in \mathbb{S}$.}
\end{IEEEproof}

\begin{theorem}
\label{thm:the_theorem}
    Given a state-space realization such as \eqref{eq:ss_output_current_augmented_dqz} or \eqref{eq:ss_circulating_current_augmented_dqz}, and its polytopic constraints $\mathbb{P}$, $\mathbb{X}$ and $\mathbb{U}$. 
    The solution of the optimization problem 
    \begin{equation}
        \begin{aligned}
             \min_{P,K}& \quad -\operatorname{det}(P)^{1/n_x}\\
             &\text{s.t.} \quad \eqref{eq:constrained_synthesis}
        \end{aligned}
    \end{equation}
    yields an optimal RCR.  
\end{theorem}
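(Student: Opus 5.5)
The plan is to read ``optimal RCR'' as follows. The pair $(P^\star,K^\star)$ built from the solution should be feasible for \eqref{eq:optimzation_problem_lyapunov}--\eqref{eq:optimzation_problem_const_1}. It should therefore give robust stability together with an admissible invariant set $\mathbb{S}$. Finally, among all pairs certified by \eqref{eq:constrained_synthesis}, it should maximize the objective \eqref{eq:optimzation_problem_min}, which measures the volume of the ellipsoid $\{x: x^\top P x\le 1\}$.

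The argument splits into feasibility and optimality, both resting on Lemma~\ref{lmm:main_lemma_paper}.

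\emph{Feasibility.} Take any $(Z,Y)$ satisfying \eqref{eq:constrained_synthesis} and set $P=Z^{-1}$, $K=YZ^{-1}$.
\begin{itemize}
\item By Lemma~\ref{lmm:main_lemma_paper}, \eqref{eq:optimzation_problem_lyapunov}--\eqref{eq:optimzation_problem_const_1} hold for every $\rho\in\mathbb{P}$.
\item From \eqref{eq:optimzation_problem_lyapunov}, $V(x)=x^\top P x$ strictly decreases along $x^+=(\textbf{A}(\rho)+\textbf{B}(\rho)K)x$. This gives quadratic stability of the closed loop for all admissible $\rho$, uniformly, with a single Lyapunov matrix.
\item The sublevel set $\{x^\top P x\le 1\}$ is forward invariant.
\item The ellipsoid inclusions shown in the proof of Lemma~\ref{lmm:main_lemma_paper} give $\mathbb{S}\subseteq\mathbb{X}$ and $K\mathbb{S}\subseteq\mathbb{U}$.
\end{itemize}
Hence, starting in $\mathbb{S}$, trajectories stay in $\mathbb{S}$, the state and input constraints are never violated, the saturation in Fig.~\ref{fig:SEST26_Control_Architecture} is inactive, and the error converges to zero. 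So the controller is a constraint-admissible RCR.

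\emph{Optimality.} The objective should be rewritten in the decision variables. Since $\det(P)=\det(Z)^{-1}$, minimizing $-\det(P)^{1/n_x}$ is not directly convex in $Z$. I would note that the ellipsoid $\{x^\top Z^{-1}x\le1\}$ has volume proportional to $\det(Z)^{1/2}$. The intended problem, maximizing the domain of attraction, is therefore equivalent to maximizing $\det(Z)^{1/n_x}$, i.e.\ minimizing $-\det(Z)^{1/n_x}$. This objective is convex on $Z\succ0$, because $\det(\cdot)^{1/n}$ is concave.

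The constraints \eqref{eq:constrained_synthesis} are LMIs in $(Z,Y)$, so the feasible set is convex. A global optimum $(Z^\star,Y^\star)$ exists provided the feasible set is nonempty and compact in $Z$. Compactness follows from \eqref{eq:constrained_synthesis_2_State}, which bounds $g_{t_x}^\top Z g_{t_x}\le 1$. Because $\mathbb{X}$ is a bounded box, this gives $Z\preceq cI$. Taking $Y$ from the compact slice, via \eqref{eq:constrained_synthesis_3_Input}, completes the argument. The resulting $(P^\star,K^\star)$ maximizes the certified admissible region over all controllers certified by \eqref{eq:constrained_synthesis}.

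The main obstacle is the objective mismatch. As written, $-\det(P)^{1/n_x}$ with $P=Z^{-1}$ rewards a large $P$, i.e.\ a small ellipsoid. This contradicts the stated goal and is nonconvex in $Z$. I would resolve this first by arguing that the intended criterion is the ellipsoid volume, with the objective interpreted as $-\det(P^{-1})^{1/n_x}=-\det(Z)^{1/n_x}$. I would then prove optimality for that convex reformulation.

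A second caveat should be stated explicitly: optimality is only relative to the sufficient conditions \eqref{eq:constrained_synthesis}, not to the original problem \eqref{eq:optimzation_problem}. The vertex and common-Lyapunov relaxations of Lemma~\ref{lmm:main_lemma_paper} give only one direction.
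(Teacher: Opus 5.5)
Your route is the paper's: feasibility and constraint admissibility come from Lemma~\ref{lmm:main_lemma_paper}, and optimality comes from reading the determinant objective as the size of $\mathbb{S}$. The paper's proof says only this. It asserts that $-\det(P)^{1/n_x}$ is convex and ``maximizes $\mathbb{S}$ by minimizing the eigenvalues of $P$''.

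Your objection to the objective is correct and can be sharpened. If $(Z,Y)$ satisfies \eqref{eq:constrained_synthesis}, so does $(\varepsilon Z,\varepsilon Y)$ for every $\varepsilon\in(0,1]$. The vertex LMIs are homogeneous, and both $g_{t_x}^\top Z g_{t_x}$ and $h_{t_u}^\top YZ^{-1}Y^\top h_{t_u}$ scale by $\varepsilon$. Hence, with $P=Z^{-1}$, the literal objective is unbounded below and the stated problem has no solution. The paper's proof tacitly uses your reading, maximizing $\det(Z)^{1/n_x}$, which is the reading under which ``convex'' holds in the actual decision variables $(Z,Y)$. Your reformulation is therefore an improvement on the paper rather than a departure from it. So is your caveat that optimality is only relative to the sufficient conditions of the lemma, which the paper omits.

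The step that does not go through is existence via compactness. The bounds you derive from \eqref{eq:constrained_synthesis_2_State}--\eqref{eq:constrained_synthesis_3_Input} give boundedness only. The feasible set is not closed, because $Z\succ 0$ and \eqref{eq:constrained_synthesis_1_lyapunov} are strict, and the supremum of $\det Z$ can fail to be attained. A scalar example: take one vertex, $\textbf{A}=2$, $\textbf{B}=1$, $\mathbb{X}=\{|x|\le 1\}$, $\mathbb{U}=\{|u|\le 1\}$. With $k=y/z$ the constraints read $|2+k|<1$, $z\le 1$, $k^2 z\le 1$. Every feasible point then has $z\le 1/k^2<1$, while $z\to 1$ as $k\to -1^-$.

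To obtain a maximizer you need a closed problem. For example, use non-strict vertex LMIs with a fixed decay margin (top-left block $\lambda^2 Z$ with a fixed $\lambda<1$). Boundedness plus closedness then gives an optimum, with $\det Z>0$ whenever that problem has a feasible point with $Z\succ 0$, and hence strict Lyapunov decrease. Otherwise, state the theorem conditionally on a solution existing, as the paper implicitly does.

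A smaller over-claim: $\mathbb{U}$ bounds the increment $\delta u_{dq}$, not the integrated voltage reference. Constraint admissibility therefore does not imply that the saturation block in Fig.~\ref{fig:SEST26_Control_Architecture} stays inactive.
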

\begin{IEEEproof}
    \textcolor{black}{The objective function \eqref{eq:optimzation_problem_min} is a convex function that maximizes $\mathbb{S}$ by minimizing the eigenvalues of $P$ such that $K$ ensure constraint-admissible inputs for the largest region possible.
    The rest of the proof follows from Lemma~\ref{lmm:main_lemma_paper}.}
\end{IEEEproof}

Theorem~\eqref{eq:optimzation_problem} can be solved using semi-definite programming (SDP) solver MOSEK, since \eqref{eq:optimzation_problem_min} is a convex objective function and~\eqref{eq:optimzation_problem_lyapunov}-\eqref{eq:optimzation_problem_const} can be expressed as LMIs.

\section{Results}
\label{lab:Results}
To assess the RCR effectiveness, we consider the two-terminal HVDC benchmark, i.e., CIGRE B4.57 DCS1 \cite{Cigre2020dc,Vrana2013Cigre},  implemented in RTDS\textsuperscript{\textregistered} (see its simplified diagram in Figure~\ref{fig:MTDC_Diagrram_Two_Terminals}).
Full description of the model is found in \cite{Shetgaonkar2024_MTDCMOdel}, but
MMC parameters are summarized in Table~\ref{tab:Tennet_MTDC_2GW_parameters}.

\begin{figure}[!htp]
    \centering
    \includegraphics[width=\linewidth]{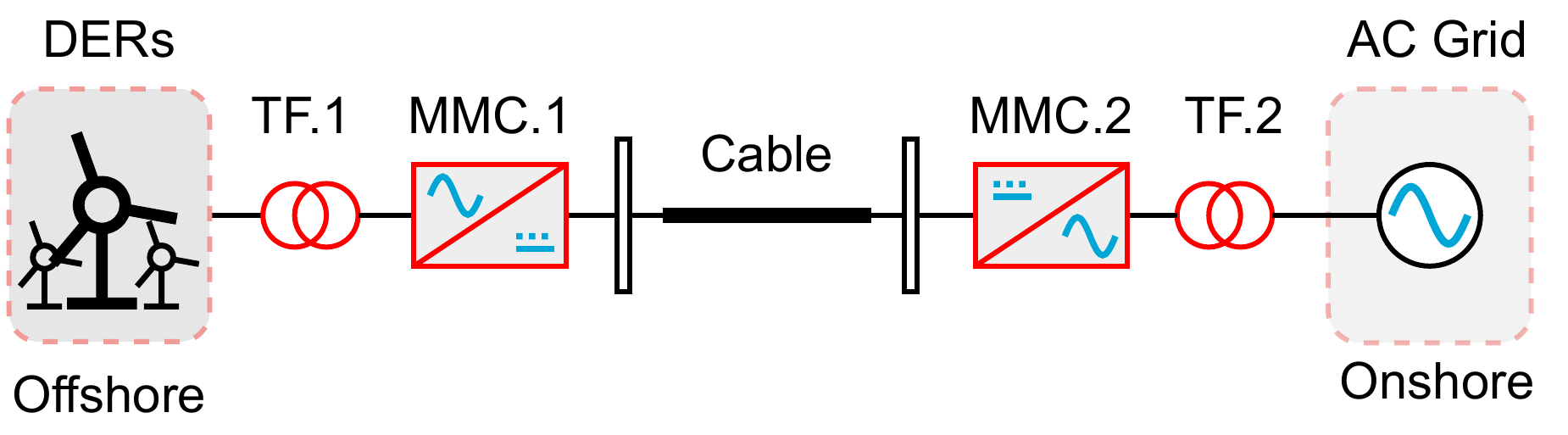}
    \caption{Simplified diagram  of the two terminal HVDC transmission system.}
    \label{fig:MTDC_Diagrram_Two_Terminals}
\end{figure}

\begin{table}[htp]
    \centering
    \caption{\textcolor{black}{MTDC rated parameters.}} 
    \begin{tabular}{lclc}
    \hline \multicolumn{1}{c}{\textbf{Parameter}} & \multicolumn{1}{c}{\textbf{Value}} & \multicolumn{1}{c}{\textbf{Parameter}} & \multicolumn{1}{c}{\textbf{Value}} \\
    \hline Rated power & 0.8 GVA & Rated frequency & 50 Hz \\
    MMCs AC voltage & 220 kV & MMCs DC voltage & 400 kV \\
    Onshore AC voltage & 380 kV & Offshore AC voltage & 145 kV \\
    Number of SMs & 200 & SM capacitance & 10 mF \\
    SM switching voltage & 2.1 kV & SM switching current & 1 kA \\
    Arm resistance & $0.15 \Omega$ & Arm inductance & 29 mH \\
    TFs resistance & 0.006 pu & TFs inductance & 0.18 pu \\
    \hline
    \end{tabular}
    \label{tab:Tennet_MTDC_2GW_parameters}
\end{table}

\textcolor{black}{For the RCR synthesis, it is desired that up to 1~p.u. step in the current reference produces control input variations below 0.2~p.u. Accordingly, the constraints are defined as:}
\begin{equation}
\label{eq:def_study_case}
\begin{aligned}
    \mathbb{X}&:= \{\vec{e}_{dq}\in \mathbb{R}^{4}: \|\vec{e}_{dq}\|_{\infty} \leq 1\}\quad \text{and} \\ \mathbb{U}&:= \{\delta{u}_{dq}\in \mathbb{R}^{2}: \|\delta{u}_{dq}\|_{\infty} \leq0.2\}.
\end{aligned}    
\end{equation}
The uncertainty matrices, for both OCC and CCC, are defined by
\begin{equation}
    \textbf{A}_{\Delta}= \begin{bmatrix}
        \rho_1 &\rho_2\\ \rho_3 &\rho_4
    \end{bmatrix} \quad \text{and} \quad \textbf{B}_{\Delta}= \begin{bmatrix}
        \rho_5 &0\\ 0 &\rho_6
    \end{bmatrix}
\end{equation}
with 
\begin{equation*}
\label{eq:scheduling_set}
    \begin{array}{lll}
         \rho_1 \in [-\frac{6}{10^2},\frac{6}{10^2}], &\rho_2 \in [-\frac{5}{10^3},\frac{5}{10^3}], &\rho_3 \in [-\frac{5}{10^3},\frac{5}{10^3}], \\ \rho_4 \in [-\frac{6}{10^2},\frac{6}{10^2}],
         &\rho_5 \in [-\frac{3}{10^{4}},\frac{3}{10^{4}}], &\rho_6 \in [-\frac{3}{10^{4}},\frac{3}{10^{4}}].
    \end{array}
\end{equation*}
Then, employing Theorem~\ref{thm:the_theorem}, the RCR yields 
\begin{equation}
    K_{\text{RCR}} {=} \left[\begin{array}{rrrr}
-1.8887 & 0.0115 & -0.0407 & -0.0012 \\
-0.0115 & -1.8887 & 0.0012 & -0.0407
\end{array}\right]
\end{equation}
Transforming \eqref{eq:def_study_case} into \eqref{eq:constraintDefinitionLemma} is a standard procedure that is not explained due to the page limitations.  
To compare the RCR to the state-of-the-art, we consider the optimal robust current regulator (OCR) from \cite{Ayari2017_RobustHVDC} to compute $K_{\text{OCR}}$. 
We select 
$Q = 10^{4}\textbf{I}$ and $R = 10^{-4}\textbf{I}$, which yields
\begin{equation}
    K_{\text{OCR}} = \left[ \begin{array}{rlll}
-4.0878 & 0.3678 & -0.4614 & -1.8765 \\
0.8199 & 1.1786 & -0.1319 & -0.5841
\end{array}\right]
\end{equation}

\subsection{RCR Verification via MATLAB Simulations}
We verify the RCR properties by simulating in MATLAB the inner controller in closed-loop with DT uncertain models \eqref{eq:ss_output_current_dq} and \eqref{eq:ss_circulating_current_dq}. 
Figure~\ref{fig:sest26_comparison} presents $N_{\text{unc}}$ uncertain realizations of the output current closed-loop response when both controllers are used. 
We quantify the performance of the controller design by assessing the average mean absolute error (MAE) of the uncertain realization to the nominal response, such that:
\begin{equation}
    \text{KPI} = \frac{1}{N_{\text{unc}}}\sum^{N_{\text{unc}}}_i\sum^{T}_k {\frac{1}{T}\left\|{x^{\text{nom}}_{dq}(k)-x^{i_\text{unc}}_{dq}(k)}\right\|}
\end{equation}
where $T$ is the amount of sample during the simulation, $x^{\text{nom}}_{dq}$ is the nominal state trajectories,  $x^{i_\text{unc}}_{dq}$ are the state trajectories for $N_{\text{unc}}$ different uncertain realizations. 
Assessing the nominal responses, the RCR achieves a settling time of 4.0~ms, whereas the OCR settles at 4.3~ms. 
In both cases, robustness is guaranteed; however, the RCR provides, on average, a faster response and maintains behavior closer to the nominal case across all uncertainty realizations with $\text{KPI}_{\text{RCR}} =  445.9716$ and $\text{KPI}_{\text{OCR}} =  714.3589$.

\begin{figure}[htbp]
    \centering
    \includegraphics[width=\linewidth]{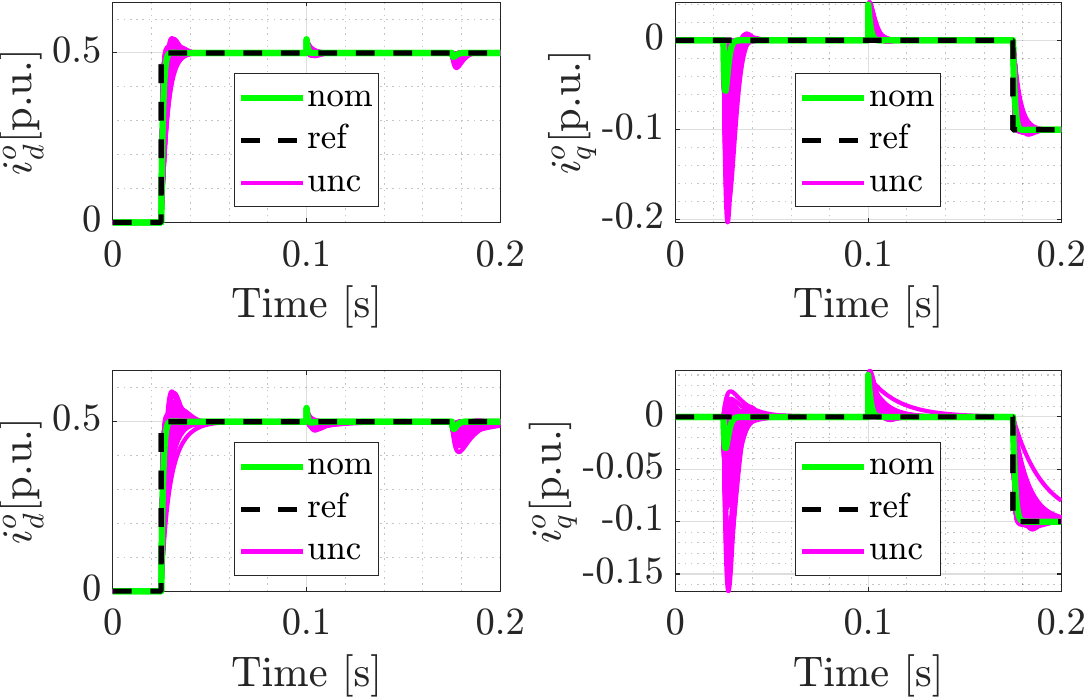}
    \caption{MMC closed-loop response output current considering $N_{\text{unc}}=200$ uncertain realizations: (top-left) $i^{o}_d$ under RCR, (top-right) $i^{o}_q$ under RCR, (bottom-left) $i^{o}_d$ under OCR \cite{Ayari2017_RobustHVDC}, (bottom-right) $i^{o}_q$ under OCR \cite{Ayari2017_RobustHVDC}.}
    \label{fig:sest26_comparison}
\end{figure}

\subsection{Controllers validation via RTDS\textsuperscript{\textregistered} simulations}
Next, we present the results obtained via RTDS simulations. 
RTDS\textsuperscript{\textregistered} real-time simulator considers average nonlinear dynamics, noise, and disturbance, while allowing us to accurately recreate faults and power demand fluctuations. 
Figure~\ref{fig:SEST26_POWER_Vg_cropped} shows the AC transformer voltage of the onshore MMC under a symmetrical three-phase line-to-ground (LLLG) fault.

\begin{figure}[!htp]
    \centering
    \includegraphics[width=\linewidth]{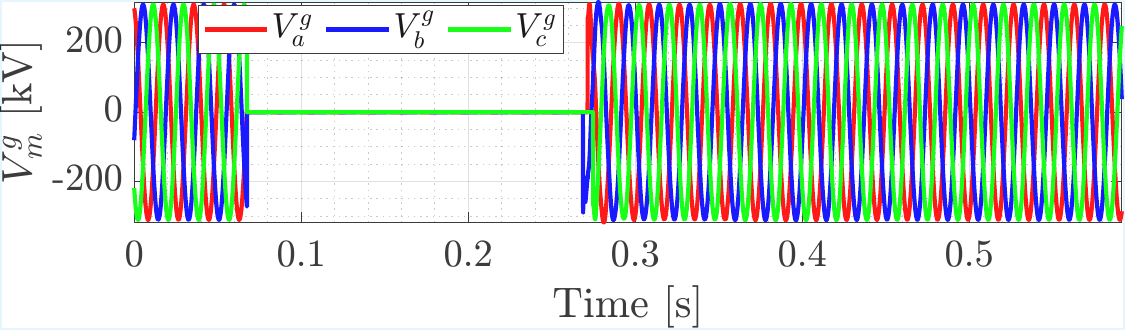}
    \caption{AC transformer voltage trajectory, in kV, when phase-to-ground fault occurs.}
    \label{fig:SEST26_POWER_Vg_cropped}
\end{figure}

Figure~\ref{fig:SEST26_POWER_Ig} illustrates the output current response under RCR and OCR control. 
Both controllers successfully clear the fault and restore power delivery to 200~MW after AC transformer recovery. 
During fault clearance, the RCR limits the overcurrent to 2.25033~kA, whereas the OCR reaches 2.51458~kA; both exhibit comparable settling times. Upon returning to the pre-fault operating point, the RCR again limits the peak current to -6.51704~kA, compared to -6.74151~kA for the OCR.

\begin{figure}[!htp]
    \centering
    \includegraphics[width=\linewidth]{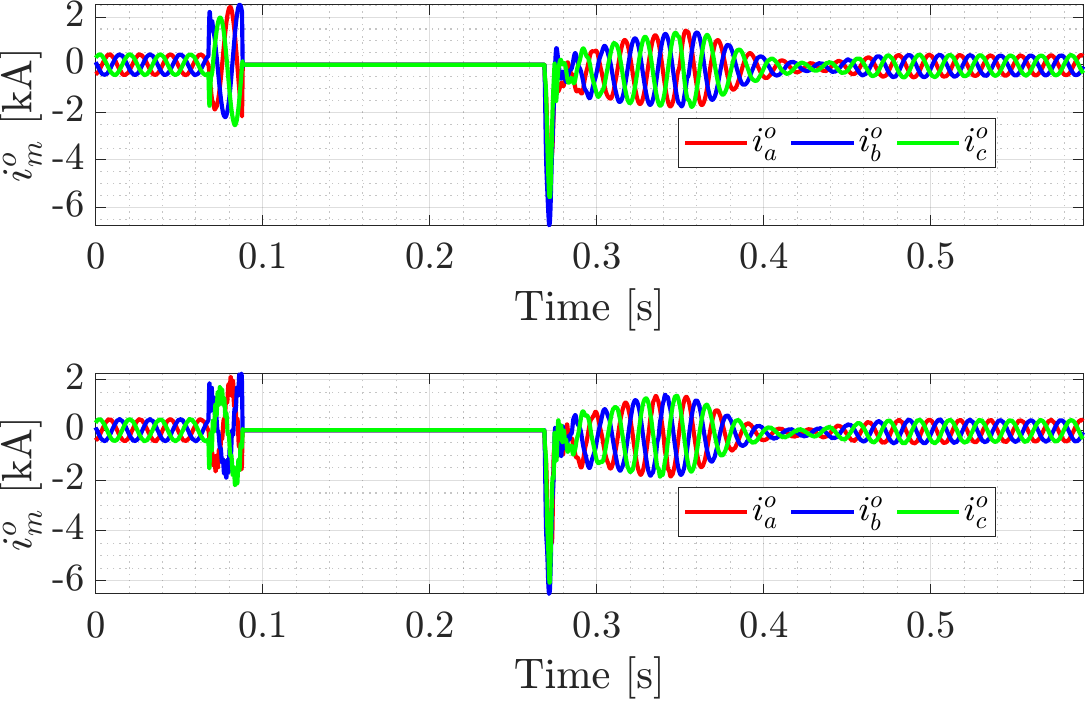}
    \caption{Output current response in [kA]: (top) OCR and (bottom) RCR.}
    \label{fig:SEST26_POWER_Ig}
\end{figure}

Figure~\ref{fig:SEST26_POWER_PQ} presents the active and reactive power measured at the MMC AC terminals. 
Both controllers achieve similar settling times, i.e., ${\approx}0.23$s. 
During post-fault recovery, the RCR creates a lower power peak of 5.628MW.
We quantify the power excess via the MAE, where $\mathrm{MAE}(\text{PMMC},\text{QMMC}) = (77.84~\mathrm{MW},8.64~\mathrm{MVAR})$ for RCR and $=(79.65~\mathrm{MW},25.06~\mathrm{MVAR})$ for OCR.
\begin{figure}[!htp]
    \centering
    \includegraphics[width=\linewidth]{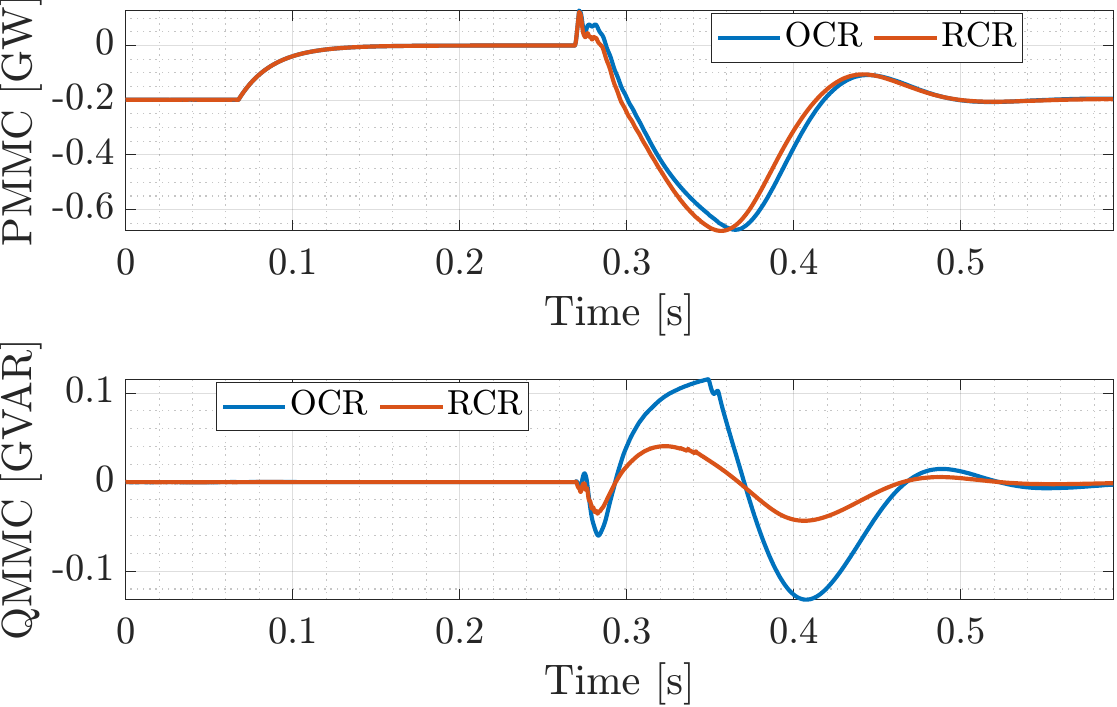}
    \caption{MMC active power (top) and reactive power (bottom). }
    \label{fig:SEST26_POWER_PQ}
\end{figure}

The performance improvement is further illustrated in Figures~\ref{fig:SEST26_POWER_OCC_DQ} and~\ref{fig:SEST26_POWER_CCC_DQ}, which present the output and circulating currents in the $dq$ reference frame.
In the $d$-axis, the RCR reduces the peak overcurrent by 0.22785~p.u. compared to the OCR, while maintaining a comparable settling time. In the $q$-axis, it achieves a 0.18852~p.u. reduction in peak magnitude and a 5\% faster transient response.
A similar trend is observed in the circulating current dynamics. The RCR provides consistently improved regulation accuracy, as quantified by the MAE: $\mathrm{MAE}(i^{c}_d,i^{c}_q) = (0.0034, 0.0138)$~p.u. for RCR and $(0.0113, 0.0261)$~p.u. for OCR.

\begin{figure}[!htp]
    \centering
    \includegraphics[width=\linewidth]{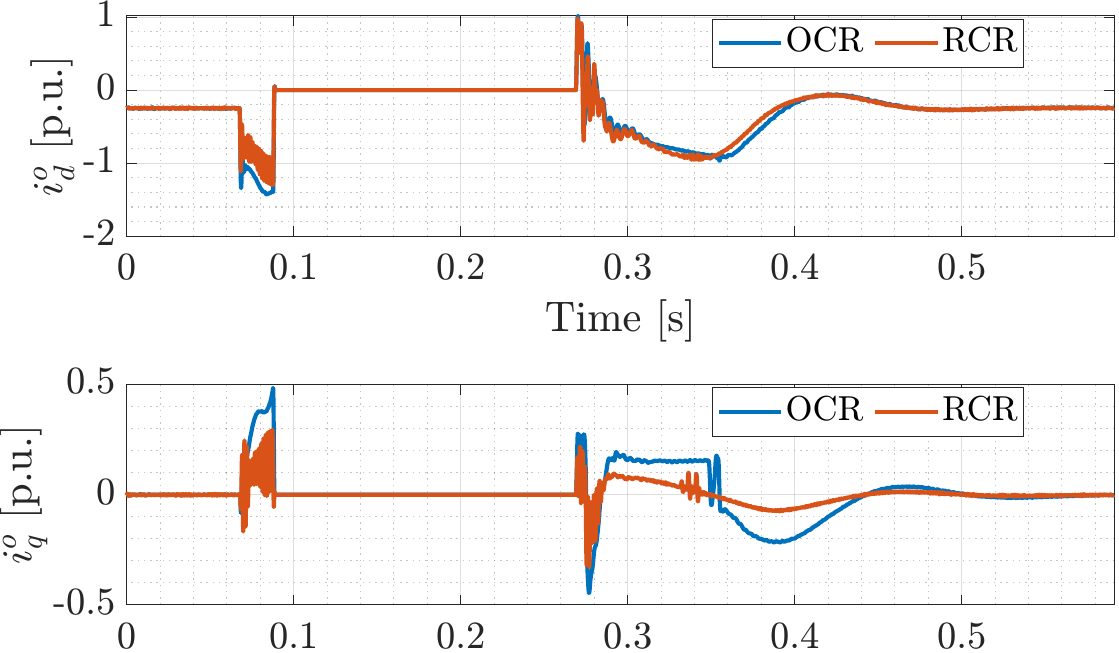}
    \caption{$dq$ output current response in [p.u.]: (top) OCR and (bottom) RCR.}
    \label{fig:SEST26_POWER_OCC_DQ}
\end{figure}

\begin{figure}[!htp]
    \centering
    \includegraphics[width=\linewidth]{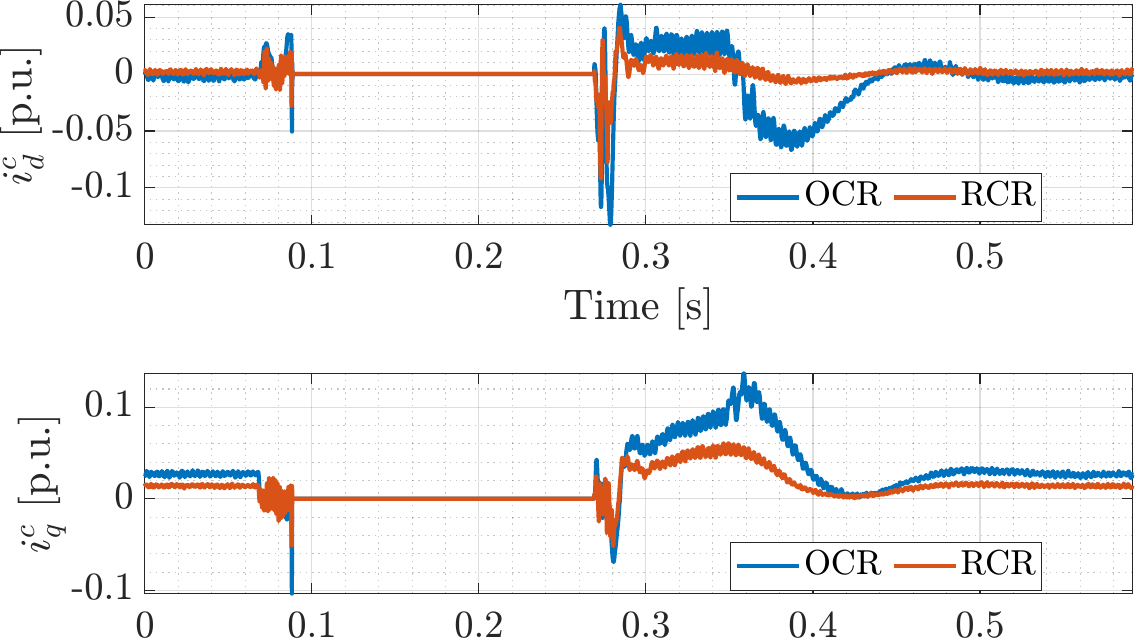}
    \caption{$dq$ circulating current response in [p.u.]: (top) OCR and (bottom) RCR.}
    \label{fig:SEST26_POWER_CCC_DQ}
\end{figure}

\section{Conclusion}
In this paper, we present a robust current regulator that leverage  Lyapunov inequalities to ensure robustness while pushing controller performance. 
As result, the proposed controller is capable to provide the faster transient within the safe operation region for all possible uncertain realization. 
To show the efficacy of the proposed method, we compare it with respect to the OCR by \cite{Ayari2017_RobustHVDC} showing faster transient and lower tracking error. 
Furthermore, the use of the proposed method can be extended to any linear state-space realization.

\bibliographystyle{IEEEtran}
\bibliography{mijnbib}

\end{document}